\setlist{nosep}
\let\origappendix\appendix 
\renewcommand\appendix{ \clearpage\pagenumbering{roman}\origappendix}
\DeclareFontFamily{OT1}{pzc}{}
\DeclareFontShape{OT1}{pzc}{m}{it}{<-> s * [1.15] pzcmi7t}{}
\DeclareMathAlphabet{\mathpzc}{OT1}{pzc}{m}{it}
\newcommand{\nid}{\noindent}
\newcommand{\tbf}{\mathbf{t}}
\newcommand{\qbf}{\mathbf{q}}
\newcommand{\sbf}{\mathbf{s}}
\newcommand{\Drm}{\mathrm{D}}
\renewcommand{\Wr}{\mathrm{Wr}}
\newcommand{\e}{\mathrm{e}}
\newcommand{\lp}{\left(}
\newcommand{\rp}{\right)}
\newcommand{\lb}{\left[}
\newcommand{\rb}{\right]}
\newcounter{thmcounter}[section]
\numberwithin{thmcounter}{section}
\theoremstyle{plain}
\newtheorem{thm}[thmcounter]{Theorem}
\newtheorem{lem}[thmcounter]{Lemma}
\newtheorem{prop}[thmcounter]{Proposition}
\theoremstyle{definition}
\theoremstyle{remark}
\newtheorem{rmq}[thmcounter]{Remark}
\begin{document}

\thispagestyle{empty}

\vspace{1.5cm}

\begin{center}

~\vspace{1cm}

\begin{LARGE}
\textbf{From the Adler--Moser polynomials\\to the polynomial tau functions of KdV}\\~\\
\end{LARGE}

\begin{large}
Ann du Crest de Villeneuve\\~\\
LAREMA, UMR CNRS 6093, Université d'Angers,\\
2 boulevard Antoine de Lavoisier, 49000 Angers, France,\\
e-mail: ducrest@math.univ-angers.fr
\end{large}

\vspace{1.5cm}

\begin{minipage}{13cm}

\nid \textbf{Key words:} \textit{KdV hierarchy, KP hierarchy, Sato's equation, Adler--Moser polynomials, Tau functions, Rational solutions, Wronskian solutions.}\\

\nid In 1978, M.\! Adler and J.\! Moser proved that there exists a unique change of variables that transforms the Adler--Moser polynomials into the polynomial tau functions of the KdV hierarchy. In this paper we exhibit this change of variables.

\end{minipage}
\end{center}

\setcounter{page}{1}

\vspace{0.5cm}

	\section*{Introduction}

The Korteweg--de Vries hierarchy (or KdV) is a sequence of pairwise commuting partial differential equations first discovered by A.\! Lenard in 1967. Each equation admits a Lax pair as follows \cite{Lax68}. Let a function $u$ which depends on infinitely many variables indexed by odd integers $\tbf = (x=t_1,t_3,t_5,\ldots)$. We denote $\Drm\, u = \partial_x u = u'$. We associate to $u$ the Schrödinger operator $L = \Drm^2 + u$. Then the KdV hierarchy is defined by the flows
\begin{align*}
\frac{\partial L}{\partial t_{2i-1}} = \left[ \left( P^{2i-1} \right)_+, L\right],
\end{align*}
where $P = L^{1/2}$ is the square root of $L$ and for any pseudo-differential operator $X$, $X_+$ is its purely differential part. The first equation reads $\partial_{t_1} u = u'$ (and we set $x=t_1$), the next few equations read
\begin{align*}
\partial_{t_3} u &= \textstyle \frac{1}{4}u^{(3)} + \frac{3}{4}uu',\\
\partial_{t_5} u &= \textstyle \frac{1}{16}u^{(5)} + \frac{4}{8}uu^{(3)} + \frac{5}{4}u'u'' + \frac{15}{8}u^2u',\\
\partial_{t_7} u &= \textstyle \frac{1}{64}u^{(7)} + \frac{7}{32}uu^{(5)} + \frac{21}{32} u'u^{(4)} + \frac{35}{32}u''u^{(3)} + \frac{35}{32}u^2u^{(3)} + \frac{35}{8}uu'u'' + \frac{35}{32}(u')^3 + \frac{35}{16}u^3u',
\end{align*}
the flow with respect to $t_3$ being the KdV equation. To any solution $u$ we associate a tau function defined, up to a multiplicative constant, by
$$u = -2(\log\tau)''.$$

If $\tau$ is polynomial then $u$ is rational. On the other hand, it was proven by Airault, McKean and Moser \cite{AMcKM77} that there are denumerably many rational solutions of KdV, each one being the orbit of a single function $u_n = \frac{n(n+1)}{x^2}$ under the flows of the hierarchy. In \cite{AM78}, the authors constructed the Adler--Moser polynomials $\theta_n(x=q_1,q_3\ldots, q_{2n-1})$ for $n\geq 0$, defined by the recursion
\begin{align*}
\theta'_{n+1}\theta_{n-1} - \theta_{n+1}\theta_{n-1}' = (2n-1)\theta_n^2.
\end{align*}
An important result of \cite{AM78} (cf. Theorem \ref{thmAM}) is that there exists a unique change of variables that transforms the Adler--Moser polynomials into polynomial tau functions of KdV and that we recover all rational solutions of KdV. But we did not know what this change of variables was.

In this paper, we show that the following change of variables transforms the Adler--Moser polynomials into the polynomial tau functions of KdV: $q_1=t_1=x$, and
\begin{align*}
\sum_{i\geq 2} \frac{q_{2i-1}}{\alpha_{2i-1}} z^{2i-1} = \tanh \lp \sum_{i\geq 2} t_{2i-1} z^{2i-1} \rp.
\end{align*}
where $\alpha_{2i-1} = (-1)^{i-1} 3^25^2\ldots (2i-3)^2 (2i-1)$. To do so, we apply this change of variables to the Adler--Moser polynomials to get some polynomials $\tau_n(t_1,t_3,t_5,\ldots)$. Then by seeing them as functions $\tau_n(t_1,t_2,t_3,\ldots)$ of even and odd times we show that they are tau functions of the Kadomstev--Petviasvhili hierarchy (KP). Then since the $\tau_n$'s actually depend only on odd times, they are indeed tau functions of KdV.

It is well known how to compute the polynomial tau functions of KdV without using the Adler--Moser polynomials. For instance, in \cite{Hir04} R.\! Hirota constructs a family of tau functions of KP in terms of Wronskians of the elementary Schur polynomials, which can be reduced to recover the polynomial tau functions of KdV. But the Adler--Moser polynomials reveal a recursive structure in the space of rational solutions of KdV. It would be interesting to investigate how to generalize this to the Drinfeld--Sokolov hierarchies.

	\section{The Adler--Moser Polynomials}

Let a set of evolution variables $\qbf = (x=q_1, q_3, q_5,\ldots)$\footnote{In their original paper \cite{AM78} the authors use the variables $\tau_i$ instead. But since then, the letter $\tau$ has been used rather for the tau functions. Moreover, we use odd indices so that we can later complete the set into the variables of KP.}. We save the variables $\tbf = (x=t_1, t_3, t_5, \ldots)$ for the tau functions of KdV. The Adler--Moser polynomials form a sequence $\theta_n(q_1,q_3\ldots, q_{2n-1})$ for $n\geq 0$, defined by the following recursion \cite{AM78}: $\theta_0 = 1$, $\theta_1 = x$ and for $n\geq 1$,
\begin{equation}\label{eqDiffRecAM}
\theta'_{n+1}\theta_{n-1} - \theta_{n+1}\theta_{n-1}' = (2n-1)\theta_n^2,
\end{equation}
where $f' = \Drm f = \partial_x f$. This recursion leaves an integration constant that is chosen to be $q_{2n-1}$ when computing $\theta_{n}$. We can check that $\theta_n$ has degree $d_n = \frac{1}{2}n(n+1)$ in $x$ and $q_{2n-1}$ is actually the coefficient of $x^{d_{n-2}}$ in this polynomial. The first few polynomials read
\begin{align*}
\theta_0 &= x,\\
\theta_1 &= x,\\
\theta_2 &= x^3 + q_3,\\
\theta_3 &= x^6 + 5q_3x^3 + q_5x - 5q_3^2,\\
\theta_4 &=\textstyle x^{10} + 15q_3x^7 + 7q_5x^5 - 35q_3q_5x^2 + 175q_3^3x - \frac{7}{3}q_5^2 + q_7x^3 + q_3q_7.
\end{align*}
In that same article (\cite{AM78}, pp. 17--18), the authors state the following theorem.

\begin{thm}[Adler--Moser, 1978]\label{thmAM}
There exists a unique change of variables $\qbf\mapsto \tbf$ that transforms the Adler--Moser polynomials $\theta_n(\qbf)$ into the polynomial tau functions $\tau_n(\tbf)$ of KdV. That is, the rational functions $u_n = -2(\log\tau)''$ define operators $L_n = \Drm^2 + u_n$ that satisfy the Lax system of KdV:
\begin{align*}
\frac{\partial L_n}{\partial t_{2i-1}} = \lb \lp L_n^{\frac{2i-1}{2}}\rp_+, L_n \rb.
\end{align*}
\end{thm}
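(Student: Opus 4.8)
The plan is to prove that the Adler--Moser polynomials $\theta_n(\qbf)$ and the polynomial KdV tau functions $\tau_n(\tbf)$ lie in the same space --- monic polynomials in $x$ of degree $d_n=\tfrac12 n(n+1)$ carrying exactly one free constant per step --- and satisfy the same bilinear recursion in $x$; the change of variables is then forced by matching these step-by-step constants. The natural anchor is the base point: at $\qbf=0$ one reads off $\theta_n=x^{d_n}$, so that $u_n=-2(\log\theta_n)''=n(n+1)/x^2$, exactly the Airault--McKean--Moser seed \cite{AMcKM77}. I would realise the identification as a deformation of this base point.

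On the KdV side I would first invoke \cite{AMcKM77}: every rational solution of the hierarchy is the orbit of some $u_n=n(n+1)/x^2$, and along the orbit it remains rational with $\tau_n$ a monic polynomial of degree $d_n$ in $x$. Each of its $d_n$ simple zeros gives $u_n=-2(\log\tau_n)''$ a double pole with coefficient $2$, and these zeros collide at the base point to restore $n(n+1)/x^2$. Thus $\theta_n$ and $\tau_n$ genuinely live in the same space, each depending on $n-1$ auxiliary parameters ($q_3,\dots,q_{2n-1}$ on one side, $t_3,\dots,t_{2n-1}$ on the other).

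The heart of the matter is the bilinear relation linking consecutive levels. Consecutive potentials $u_{n-1},u_n,u_{n+1}$ are tied by Darboux--Crum transformations, so in such a chain the $\tau_n$ are Wronskians of a common stack of seed functions and a Pl\"ucker-type (Desnanot--Jacobi) identity yields $\Wr(\tau_{n-1},\tau_{n+1})=c_n\,\tau_n^2$ for constants $c_n$ depending only on $n$. The degree count is consistent, $d_{n-1}+d_{n+1}-1=2d_n$, and comparing leading coefficients pins down $c_n$; once the $\tau_n$ are normalised to be monic one recovers the Adler--Moser recursion \eqref{eqDiffRecAM} with $c_n=2n-1$. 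This is the step I expect to be the main obstacle: one must show the relation holds identically in all the times, i.e. that the Darboux/Wronskian structure --- and hence the constancy of the $c_n$ in both $x$ and $\tbf$ --- is preserved along every flow $\partial_{t_{2i-1}}$, so that $u_n=-2(\log\tau_n)''$ solves the full Lax system and not merely the $t_3$ equation.

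Granting this, existence and uniqueness are bookkeeping. The recursion \eqref{eqDiffRecAM} determines $\theta_{n+1}$ from $\theta_n,\theta_{n-1}$ only up to the shift $\theta_{n+1}\mapsto\theta_{n+1}+c\,\theta_{n-1}$, which changes nothing since $\Wr(\theta_{n-1},\theta_{n-1})=0$; this lone freedom is the integration constant $q_{2n+1}$, entering as the coefficient of $x^{d_{n-1}}$. On the tau side that same coefficient of $\tau_{n+1}(\tbf)$ is a definite function $g_{2n+1}(t_3,\dots,t_{2n+1})$. Setting $q_{2i-1}=g_{2i-1}(\tbf)$ for every $i$ defines the substitution $\qbf\mapsto\tbf$, and an induction on $n$ through the shared recursion gives $\theta_n(\qbf(\tbf))=\tau_n(\tbf)$. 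Conversely any admissible substitution must reproduce that coefficient, so each $q_{2i-1}$ is forced, which is uniqueness; and since the flow $\partial_{t_{2i-1}}$ enters $\tau_i$ nontrivially at first order, $g_{2i-1}$ carries $t_{2i-1}$ with a nonzero linear leading term, making the map triangular with invertible Jacobian --- a genuine change of variables.
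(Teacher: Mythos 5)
First, a point of comparison: the paper never proves this theorem itself --- it is quoted from \cite{AM78} (pp.~17--18) --- and what the paper actually proves is the sharper Theorem~\ref{thmChangeVar}, by a route entirely different from yours: it writes down the change of variables explicitly, represents the transformed polynomials as Wronskians $\tau_n=\Wr(\phi_1,\ldots,\phi_n)$ of the functions $\phi_j$ with generating series \eqref{eqGenSeriesPhi}, and then verifies Sato's equation for the dressing operator $S_n=\Delta_n\Drm^{-n}$, so that the $\tau_n$ are KP tau functions depending only on odd times, hence KdV tau functions. Your proposal instead follows the spirit of Adler and Moser's original argument: place both families in the same finite-dimensional space (monic in $x$, degree $d_n$, one free constant per step) and match them through the bilinear recursion. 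That is a legitimate strategy, but it is not the paper's, and as written it has a real hole.

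The hole is exactly the step you yourself flag as ``the main obstacle'' and then do not close: the claim that the polynomial KdV tau functions satisfy $\tau_{n+1}'\tau_{n-1}-\tau_{n+1}\tau_{n-1}'=c_n\,\tau_n^2$ \emph{identically in all the times} $t_3,t_5,\ldots$, with $c_n$ an absolute constant. What you cite from \cite{AMcKM77} gives rationality of the solutions, the pole structure, and the orbit description, but it does not give the Darboux--Crum/Wronskian chain structure compatibly with every flow $\partial_{t_{2i-1}}$; establishing that compatibility is essentially the content of Adler and Moser's theorem itself, so the proposal assumes the crux of what it is supposed to prove. Everything downstream --- the definition of the functions $g_{2i-1}$, the induction giving $\theta_n(\qbf(\tbf))=\tau_n(\tbf)$, uniqueness, and the triangularity of the substitution --- rests on that unproven identity. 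A smaller but symptomatic inconsistency: for monic polynomials your own leading-coefficient comparison forces $c_n=d_{n+1}-d_{n-1}=2n+1$, not $2n-1$ as printed in \eqref{eqDiffRecAM} (test $n=1$: $\theta_2'\theta_0-\theta_2\theta_0'=3x^2$ while $\theta_1^2=x^2$), so the constant you assert contradicts the degree count you invoke; the mismatch traces back to a typo in the paper's statement of the recursion, but your argument should have detected it rather than reproduced it.
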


\nid In this article we prove that the desired change of variables is given by: $q_1=t_1=x$, and
\begin{align*}
\sum_{i\geq 2} \frac{q_{2i-1}}{\alpha_{2i-1}} z^{2i-1} = \tanh \lp \sum_{i\geq 2} t_{2i-1} z^{2i-1} \rp.
\end{align*}
where $\alpha_{2i-1} = (-1)^{i-1} 3^25^2\ldots (2i-3)^2 (2i-1)$. The latter coefficients $\alpha_{2i-1}$ where already given in Adler and Moser's article. Notice that this change of variables amounts to simply change the choice of the integration constant in the differential recursion \eqref{eqDiffRecAM}. To prove the statement, let us first recall and prove some results stated in \cite{AM78}. These lemmas relate the Adler--Moser polynomials to a Wronskian representation through a multiplicative factor and a simple rescaling of the variables.\\
\indent Let another set of variables $\mathbf{s} = (x=s_1, s_3,s_5,\ldots)$ and functions $\psi_j(\sbf)$, $j\geq 0$, be defined by:
\begin{align}\label{eqPsiGen}
\sum_{j\geq 1} \psi_j z^{2j-1} = \sinh(xz) + \cosh(xz)\sum_{i\geq 2} s_{2i-1}z^{2i-1},
\end{align}
and $\psi_0 = 0$. It readily implies the recursion $\psi_j'' = \psi_{j-1}$, where we use the notations $\psi_j' = \partial_x\psi_j = \Drm\psi_j$ and $\psi_j^{(i)} = \partial^i_x\psi_j = \Drm^i\psi_j$.

\begin{lem}
The Wronskians of the functions $\psi_j$, defined by
\begin{align*}
W_n := \mathrm{Wr}(\psi_1,\ldots, \psi_n) = \det\lp \Drm^{i-1}\psi_j \rp_{i,j=1,\ldots, n},
\end{align*}
satisfy the recursion
\begin{align}\label{eqDiffRecWronsk}
W_{n+1}'W_{n-1} - W_{n+1}W_{n-1}' = W_n^2.
\end{align}
\end{lem}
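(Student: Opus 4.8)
The plan is to prove the Wronskian identity $W_{n+1}'W_{n-1} - W_{n+1}W_{n-1}' = W_n^2$ directly from the structure of the functions $\psi_j$, exploiting the key relation $\psi_j'' = \psi_{j-1}$. This relation says that differentiating twice shifts the index down by one, which is exactly the kind of structure that makes Wronskians of such sequences satisfy bilinear (Hirota-type) identities. I would recall the classical Jacobi/Sylvester determinant identities for Wronskians: for a Wronskian $W_n = \Wr(\psi_1,\ldots,\psi_n)$, there are standard formulas expressing $W_n'$, $W_n''$, and products like $W_{n+1}W_{n-1}$ in terms of bordered Wronskians and minors.

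First I would set up notation for the relevant minors. Write $W_n = \det(\Drm^{i-1}\psi_j)_{i,j=1}^{n}$, an $n\times n$ determinant whose rows are successive derivatives. Differentiating $W_n$ replaces the last row $(\Drm^{n-1}\psi_j)_j$ by $(\Drm^{n}\psi_j)_j$ (the Leibniz rule collapses since all other differentiated rows coincide with existing rows). More useful is the classical identity of Sylvester type: if I denote by $W_n^{[k]}$ the determinant obtained from $W_{n+1}=\Wr(\psi_1,\ldots,\psi_{n+1})$ by deleting the row of $k$-th derivatives, then the left-hand side $W_{n+1}'W_{n-1} - W_{n+1}W_{n-1}'$ is a Jacobi (Desnanot--Jacobi / Plücker) combination of such bordered determinants. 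The Desnanot--Jacobi identity applied to an $(n+1)\times(n+1)$ matrix relates the full determinant, its corner minors, and the central minor; applied to the right array of derivatives it yields precisely a quadratic relation among consecutive Wronskians.

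The main technical step, and the one I expect to be the chief obstacle, is handling the role of $\psi_j'' = \psi_{j-1}$ to reduce the bordered determinants to products of $W_n$'s. The naive Desnanot--Jacobi identity produces terms involving Wronskians with a \emph{gap} in the derivative orders (e.g. rows $\Drm^0,\ldots,\Drm^{n-1},\Drm^{n+1}$, skipping $\Drm^n$); I would use $\psi_j'' = \psi_{j-1}$ to rewrite the skipped-derivative rows in terms of shifted index rows and thereby collapse these gapped Wronskians back into the standard $W_n$, $W_{n-1}$, $W_{n+1}$. Concretely, I expect that a bordered Wronskian with the top row $\psi_j$ removed equals $\Wr(\psi_2,\ldots,\psi_{n+1})$, which by the shift $\psi_j=\Drm^{-2}\psi_{j-1}$ (or rather by column operations using $\psi_j''=\psi_{j-1}$) relates back to $W_n$ evaluated on the lower-indexed family. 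The bookkeeping to show all stray terms cancel and the coefficient on the right-hand side is exactly $1$ (matching $W_n^2$, with no factor of $(2n-1)$ as appears in the Adler--Moser recursion \eqref{eqDiffRecAM}) is where the calculation must be done carefully.

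An alternative, possibly cleaner route is to avoid Desnanot--Jacobi and instead proceed by induction combined with the bilinear differential structure, or to invoke the general fact that Wronskians of a sequence satisfying a constant-coefficient recursion in the derivative yield tau functions satisfying Hirota bilinear equations; the identity \eqref{eqDiffRecWronsk} is then a low-order Hirota equation. However, since the statement is self-contained and elementary, I would favour the direct determinantal approach via Jacobi's identity, as it makes transparent why the relation \eqref{eqDiffRecWronsk} has coefficient $1$ while the Adler--Moser recursion \eqref{eqDiffRecAM} carries the factor $(2n-1)$ — the discrepancy being absorbed precisely by the multiplicative factor and rescaling of variables that (as the surrounding text announces) relate $W_n$ to $\theta_n$.
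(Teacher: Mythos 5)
Your proposal stops short of a proof exactly where the content of the lemma lies: you explicitly defer both the reduction of the bordered/gapped determinants via $\psi_j''=\psi_{j-1}$ and the final bookkeeping, and the reduction you anticipate is in fact false. A Wronskian of the shifted family does \emph{not} collapse back to a Wronskian of the lower-indexed family; differentiation does not pass through Wronskians that way. Concretely, at $s_3=s_5=\cdots=0$ one has $\psi_j=x^{2j-1}/(2j-1)!$, so $\Wr(\psi_2,\psi_3)=x^7/360$ while $W_2=\Wr(\psi_1,\psi_2)=x^3/3$. Moreover, the Desnanot--Jacobi identity applied to the $(n+1)\times(n+1)$ Wronskian matrix of $\psi_1,\ldots,\psi_{n+1}$ gives (up to signs)
\begin{align*}
W_{n+1}\,\Wr(\psi_2',\ldots,\psi_n') \;=\; W_n\,\Wr(\psi_2',\ldots,\psi_{n+1}') \;-\; \Wr(\psi_2,\ldots,\psi_{n+1})\,\Wr(\psi_1',\ldots,\psi_n'),
\end{align*}
whose terms are Wronskians of derivative families and shifted families, none of which reduce to $W_{n-1}$, $W_n$, $W_{n+1}$ --- and no derivatives $W_{n+1}'$, $W_{n-1}'$ appear at all. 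Note also that \eqref{eqDiffRecWronsk} is not a generic Wronskian identity: for $n=1$ it reads $\psi_1\psi_2''-\psi_1''\psi_2=\psi_1^2$, which fails for generic $\psi_1,\psi_2$. So the structural input $\psi_j''=\psi_{j-1}$, $\psi_1=x$ must enter at a specific point, and your plan never pins down where.

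The missing idea --- the paper's --- is to border the Wronskian by an \emph{arbitrary} function: set $W_n(\chi):=\Wr(\psi_1,\ldots,\psi_n,\chi)$. The Jacobi identity $W_n'(\chi)W_{n+1}-W_n(\chi)W_{n+1}'=W_{n+1}(\chi)W_n$ (Equation \eqref{eqJacWronsk}) is then a \emph{generic} fact, provable either by your determinantal route (the two-rows/two-columns Jacobi minor identity applied to the Wronskian matrix of $(\psi_1,\ldots,\psi_{n+1},\chi)$, deleting the rows of orders $n,n+1$ and the columns of $\psi_{n+1},\chi$) or, as the paper does, by noting that the left-hand side is a linear differential operator of order $n+1$ in $\chi$ with leading coefficient $W_nW_{n+1}$ which vanishes at $\chi=\psi_1,\ldots,\psi_{n+1}$, hence is proportional to $W_{n+1}(\chi)$. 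The special structure enters only in the specialization $\chi=1$: expanding $\Wr(\psi_1,\ldots,\psi_n,1)=(-1)^n\Wr(1,\psi_1,\ldots,\psi_n)$ along the column of the constant $1$ removes the row of $0$-th derivatives, and the relation $\psi_j^{(i)}=\psi_{j-1}^{(i-2)}$ together with $\psi_1=x$ identifies what remains with $W_{n-1}$, i.e. $W_n(1)=(-1)^nW_{n-1}$ (Equation \eqref{eqWk(1)}). Substituting $\chi=1$ into \eqref{eqJacWronsk}, the signs cancel and \eqref{eqDiffRecWronsk} follows at once. So what you need is not more careful bookkeeping on gapped minors, but the right bordering function: the constant $1$, whose border is precisely what converts $\psi_j''=\psi_{j-1}$ into a drop of the Wronskian by one level.
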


\begin{proof}
For any smooth function $\chi(\sbf)$ with respect to $x$, denote $W_n(\chi) := \Wr\lp \psi_1,\ldots, \psi_n, \chi \rp.$ Then one has Jacobi's identity:
\begin{align}\label{eqJacWronsk}
W_n'(\chi)W_{n+1} - W_n(\chi)W_{n+1}' = W_{n+1}(\chi)W_n.
\end{align}
This can be proven by noticing that the left-hand side is a linear differential operator of order $n+1$ which vanishes for $\chi = \psi_1,\ldots, \psi_n$, as well as for $\psi_{n+1}$. Yet the functions $\psi_j$ being linearly independent, the left-hand side must be proportional to $W_{n+1}(\chi)$. Comparing the highest coefficient, one obtain Equation \eqref{eqJacWronsk}. Now thanks to the relation $\psi_j'' = \psi_{j-1}$ and $\psi_1 = x$, we can compute that for $\chi = 1$,
\begin{align}\label{eqWk(1)}
W_n(1) = (-1)^kW_{n-1}.
\end{align}

\nid Then setting $\chi = 1$ in Jacobi's identity \eqref{eqJacWronsk}, one finds Equation \eqref{eqDiffRecWronsk}.
\end{proof}

\nid Now since $W_0 = \theta_0 = 1$ and $W_1 = \theta_1 = x$, the two sequences of polynomials differ only by a multiplicative factor that can be computed:
\begin{align}\label{eqPropAMvsWronsk}
\theta_n(\qbf) = \mu_n W_n(\sbf), & & \mu_n = \prod_{j=1}^k (2k-2j+1)^j.
\end{align}

\begin{lem}
The parameters of the Adler--Moser polynomials $\theta_n (\qbf)$ and those of the Wronskians $W_n(\sbf)$ are related via a rescaling:
\begin{align}\label{eqRescale}
s_{2i-1} = \frac{q_{2i-1}}{\alpha_{2i-1}}, & & \alpha_{2i-1} = (-1)^{i-1} 3^25^2\ldots (2i-3)^2 (2i-1)
\end{align}
\end{lem}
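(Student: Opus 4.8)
The plan is to pin down the rescaling \eqref{eqRescale} by comparing how the ``top'' parameters $q_{2n-1}$ and $s_{2n-1}$ enter $\theta_n$ and $W_n$ respectively, and then transporting that comparison through the proportionality $\theta_n=\mu_nW_n$ of \eqref{eqPropAMvsWronsk}. The key structural fact is that each of these parameters enters its polynomial linearly and at the top: by \eqref{eqPsiGen} the variable $s_{2n-1}$ appears only in $\psi_n$, as its constant term, hence affects $W_n=\Wr(\psi_1,\dots,\psi_n)$ through the single column $j=n$; whereas in \eqref{eqDiffRecAM} the variable $q_{2n-1}$ is exactly the integration constant produced when solving the first-order equation for $\theta_n$ in terms of $\theta_{n-1}$ and $\theta_{n-2}$.

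First I would establish the two derivative identities
\begin{align*}
\frac{\partial W_n}{\partial s_{2n-1}} = \Wr(\psi_1,\dots,\psi_{n-1},1) = (-1)^{n-1}W_{n-2}, \qquad \frac{\partial \theta_n}{\partial q_{2n-1}} = \theta_{n-2}.
\end{align*}
The first holds because $\partial_{s_{2n-1}}\psi_n=1$ while $\partial_{s_{2n-1}}\psi_j=0$ for $j<n$, so differentiating the determinant replaces the entries of the last column by $\Drm^{i-1}1$, i.e. by $1$ for $i=1$ and $0$ for $i\ge 2$; the resulting determinant is precisely $W_{n-1}(1)$ in the notation of the first lemma, and equals $(-1)^{n-1}W_{n-2}$ by \eqref{eqWk(1)}. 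The second holds because the particular solution of \eqref{eqDiffRecAM} depends only on $\theta_{n-1},\theta_{n-2}$, hence not on $q_{2n-1}$, while the homogeneous term is $q_{2n-1}\theta_{n-2}$.

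Next I would combine these with \eqref{eqPropAMvsWronsk}. Differentiating $\theta_n=\mu_nW_n$ in $q_{2n-1}$ by the chain rule, and using that by weighted homogeneity ($\deg x=1$, $\deg q_{2i-1}=\deg s_{2i-1}=2i-1$, the latter read off from \eqref{eqPsiGen}) the lower variables $s_{2i-1}$ with $i<n$ cannot depend on $q_{2n-1}$, only the $i=n$ term survives, giving $\theta_{n-2}=\mu_n(-1)^{n-1}W_{n-2}\,\partial_{q_{2n-1}}s_{2n-1}$; with $\theta_{n-2}=\mu_{n-2}W_{n-2}$ this yields
\begin{align*}
\frac{\partial s_{2n-1}}{\partial q_{2n-1}} = (-1)^{n-1}\frac{\mu_{n-2}}{\mu_n}.
\end{align*}
It then remains to evaluate $\mu_n/\mu_{n-2}$ from the product formula of \eqref{eqPropAMvsWronsk}: matching the exponent of each odd factor shows that every odd number $1,3,\dots,2n-3$ occurs with net exponent $2$ and that $2n-1$ occurs with exponent $1$, so $\mu_n/\mu_{n-2}=3^25^2\cdots(2n-3)^2(2n-1)=(-1)^{n-1}\alpha_{2n-1}$, whence $\partial_{q_{2n-1}}s_{2n-1}=1/\alpha_{2n-1}$.

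The main obstacle is the final step: upgrading the constant derivative $\partial_{q_{2n-1}}s_{2n-1}=1/\alpha_{2n-1}$ to the exact diagonal relation \eqref{eqRescale}, i.e. excluding any additive dependence of $s_{2n-1}$ on the lower variables. Homogeneity only gives $s_{2n-1}=q_{2n-1}/\alpha_{2n-1}+h_n(q_3,\dots,q_{2n-3})$ with $h_n$ homogeneous of weight $2n-1$; for $n\le 4$ no such monomial exists and $h_n=0$ automatically, but for $n\ge 5$ one must rule out products such as $q_3^3$. I expect to settle this by identifying $q_{2n-1}$ and $\alpha_{2n-1}s_{2n-1}$ as the same distinguished coefficient. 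Indeed $q_{2n-1}$ is by construction the coefficient of $x^{d_{n-2}}$ in $\theta_n$; since $\theta_{n-2}=\mu_{n-2}W_{n-2}$ is monic of degree $d_{n-2}$, the coefficient of $x^{d_{n-2}}$ in $W_{n-2}$ is $1/\mu_{n-2}$, so the first identity above exhibits the $s_{2n-1}$-linear part of the coefficient of $x^{d_{n-2}}$ in $W_n$ as $(-1)^{n-1}s_{2n-1}/\mu_{n-2}$. Checking that this coefficient carries no lower-order $s$-contamination — equivalently, that the coefficient of $x^{d_{n-2}}$ in $W_n|_{s_{2n-1}=0}$ vanishes — and then reading off the coefficient of $x^{d_{n-2}}$ on both sides of $\theta_n=\mu_nW_n$ gives $q_{2n-1}=\alpha_{2n-1}s_{2n-1}$ outright, forcing $h_n=0$ and completing the induction.
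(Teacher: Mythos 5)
Your opening steps are, modulo presentation (derivatives instead of decompositions), exactly the paper's own proof: the paper writes $\theta_n = \mathring{\theta}_n + q_{2n-1}\theta_{n-2}$, uses multilinearity of the determinant in the last column together with Equation \eqref{eqWk(1)} to get $W_n = \mathring{W}_n + (-1)^{n-1}s_{2n-1}W_{n-2}$, and then compares the two through Equation \eqref{eqPropAMvsWronsk}; your evaluation $\mu_n/\mu_{n-2}=3^25^2\cdots(2n-3)^2(2n-1)=(-1)^{n-1}\alpha_{2n-1}$ is the same computation. Up to the derivative relation $\partial_{q_{2n-1}}s_{2n-1}=1/\alpha_{2n-1}$, your proposal is correct and faithful to the paper.

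The gap is in your final paragraph, and your own wording half-admits it. The statement you actually need --- that the coefficient of $x^{d_{n-2}}$ in $W_n|_{s_{2n-1}=0}$ vanishes identically in $s_3,\ldots,s_{2n-3}$, equivalently $\mathring{\theta}_n=\mu_n\mathring{W}_n$, equivalently $h_n=0$ --- is never proved; it is only announced (``I expect to settle this by\ldots'', ``Checking that\ldots''). This is not an easier auxiliary lemma: granting the induction on lower indices, it is logically \emph{equivalent} to \eqref{eqRescale} itself, so your last step rephrases the problem rather than solving it. And it is a genuine issue, exactly where you located it: the vanishing follows neither from weighted homogeneity, nor from your two derivative identities, nor from the existence of the correspondence $\theta_n=\mu_nW_n$, whose parametrization is precisely what is being determined. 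It is true --- for instance, at the first dangerous case $n=5$, the four monomial determinants contributing to the coefficient of $s_3^3x^6$ in $W_5$ (choosing which three of the columns $2,3,4,5$ contribute their $s_3$-part) cancel, $-\tfrac{1}{144}+\tfrac{1}{144}+\tfrac{1}{144}-\tfrac{1}{144}=0$ --- but a general proof needs an input your proposal does not contain, e.g.\ Adler--Moser's original inductive argument or the identification of the $W_n$ with Wronskians of (odd) elementary Schur polynomials. In fairness, the paper's proof is equally elliptic at this very point: ``comparing the last two equations'' tacitly identifies $\mathring{\theta}_n$ with $\mu_n\mathring{W}_n$, which is the whole content of the claim. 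So you have correctly isolated a subtlety the paper glosses over, which is to your credit, but isolating it is not closing it: as written, your proof is incomplete at exactly that step.
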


\begin{proof}
It all has to do with the choice of the normalization in the recursions \eqref{eqDiffRecAM} and \eqref{eqDiffRecWronsk}. For $\theta_n$, the choice is such that $q_{2n-1}$ is the coefficient of $x^{d_{n-2}}$. Moreover if $\theta_n$ is a solution of \eqref{eqDiffRecAM}, so is $\theta_n + c\theta_{n-2}$ so that the normalization can be expressed as
\begin{align*}
\theta_n = \mathring{\theta}_n + q_{2n-1}\theta_{n-2},
\end{align*}
where $\mathring{\theta}_n := \left. \theta_n \right|_{q_{2n-1} = 0}.$ Similarly, define $\mathring{W}_n := \left. W_n \right|_{s_{2n-1} = 0}.$ Since $\psi_n = \mathring{\psi_n} + s_{2n-1}$, then by Equation \eqref{eqWk(1)},
\begin{align*}
W_n = \mathring{W_n} + s_{2n-1}\Wr(\psi_1,\ldots,\psi_n,1) = \mathring{W_n} + (-1)^{k-1}s_{2n-1}W_{n-2}.
\end{align*}
Comparing the last two equations and using Equation \eqref{eqPropAMvsWronsk} one obtains the result.
\end{proof}

	\section{A Change of Variables to the Polynomial Tau Functions KdV}

\nid The aim of this section is to prove the following theorem.

\begin{thm}\label{thmChangeVar}
The following change of variables transforms the Adler--Moser polynomials into the polynomial tau functions of KdV: $q_1 = t_1 = x$, and
\begin{align}\label{eqChangeVar}
\sum_{i\geq 2} \frac{q_{2i-1}}{\alpha_{2i-1}} z^{2i-1} = \tanh \lp \sum_{i\geq 2} t_{2i-1} z^{2i-1} \rp.
\end{align}
where $\alpha_{2i-1} = (-1)^{i-1} 3^25^2\ldots (2i-3)^2 (2i-1)$.
\end{thm}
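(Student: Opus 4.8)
The plan is to translate the change of variables into a statement about Wronskians, strip it down to a clean family of plane-wave functions, run the standard Sato dressing argument to land in KP, and then close with the $2$-reduction to KdV. First I would transport \eqref{eqChangeVar} through the two lemmas of Section~1. By \eqref{eqPropAMvsWronsk} we have $\theta_n = \mu_n W_n$, and by the rescaling \eqref{eqRescale} the substitution \eqref{eqChangeVar} is precisely $\sum_{i\geq 2} s_{2i-1} z^{2i-1} = \tanh\lp \sum_{i\geq 2} t_{2i-1} z^{2i-1}\rp$. Feeding this into \eqref{eqPsiGen}, and writing $\eta := \sum_{i\geq 1} t_{2i-1} z^{2i-1}$ and $\hat\eta := \eta - xz$, the generating series of the transformed functions $\phi_j$ becomes
\begin{align*}
\sum_{j\geq 1} \phi_j z^{2j-1} = \sinh(xz) + \cosh(xz)\tanh(\hat\eta) = \frac{\sinh(\eta)}{\cosh(\hat\eta)},
\end{align*}
so the image of $\theta_n$ is $\tau_n = \mu_n\Wr(\phi_1,\ldots,\phi_n)$; since $\mu_n$ is a nonzero constant it is invisible to $u_n = -2(\log\tau_n)''$.

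Next I would eliminate the denominator. As $1/\cosh(\hat\eta)$ is independent of $x$, expanding it as $\sum_{\ell\geq 0} c_\ell(t_3,t_5,\ldots)\, z^{2\ell}$ with $c_0 = 1$ gives $\phi_j = \sum_{\ell=0}^{j-1} c_\ell\, \chi_{j-\ell}$, where the $\chi_j$ are defined by $\sum_{j\geq 1}\chi_j z^{2j-1} = \sinh(\eta)$. This relation is unitriangular in the index $j$ with coefficients that $\Drm$ ignores, so column operations yield $\Wr(\phi_1,\ldots,\phi_n) = \Wr(\chi_1,\ldots,\chi_n)$; this is exactly where the $\tanh$ earns its place. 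I would then extend the $\chi_j$ to all times by setting $\sum_{j\geq 1}\tilde\chi_j z^{2j-1} = \tfrac12\lp \e^{\xi(\tbf,z)} - \e^{\xi(\tbf,-z)} \rp$ with $\xi(\tbf,z) = \sum_{k\geq 1} t_k z^k$; these reduce to the $\chi_j$ when the even times vanish and satisfy the linear hierarchy $\partial_{t_k}\tilde\chi_j = \Drm^k\tilde\chi_j$.

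With the linear equations available, I would invoke the Wronskian-to-KP construction indicated in the introduction: form $\tilde\tau_n = \Wr(\tilde\chi_1,\ldots,\tilde\chi_n)$, the dressing operator $S_n = \frac{1}{\tilde\tau_n}\Wr(\cdot,\tilde\chi_1,\ldots,\tilde\chi_n)\Drm^{-n}$, and check that $\mathcal{L}_n = S_n\Drm S_n^{-1}$ solves $\partial_{t_i}\mathcal{L}_n = [(\mathcal{L}_n^i)_+, \mathcal{L}_n]$, so $\tilde\tau_n$ is a tau function of KP. To descend to KdV I would use the chain $\Drm^2\tilde\chi_j = \tilde\chi_{j-1}$ (with $\tilde\chi_j = 0$ for $j\leq 0$), which gives $\partial_{t_{2m}}\tilde\chi_j = \Drm^{2m}\tilde\chi_j = \tilde\chi_{j-m}$; differentiating the determinant $\tilde\tau_n$ column by column then produces only summands with a vanishing or a repeated column, whence $\partial_{t_{2m}}\tilde\tau_n = 0$. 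Thus $\tilde\tau_n$ is independent of the even times and equals $\Wr(\chi_1,\ldots,\chi_n) = \tau_n/\mu_n$, so $\tau_n$ is a KP tau function annihilated by every even flow, i.e.\ a tau function of KdV; correspondingly $L_n = \mathcal{L}_n^2 = \Drm^2 + u_n$ satisfies the Lax system of Theorem~\ref{thmAM}, with uniqueness already supplied by that theorem.

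I expect the main technical obstacle to be the pseudo-differential bookkeeping showing that $\mathcal{L}_n$ genuinely obeys the KP Lax equations (the Sato-equation verification), although this step is classical and needs nothing beyond the linear relations $\partial_{t_k}\tilde\chi_j = \Drm^k\tilde\chi_j$. The genuinely new and delicate input is instead the hyperbolic identity $\sinh(xz) + \cosh(xz)\tanh(\hat\eta) = \sinh(\eta)/\cosh(\hat\eta)$ together with the observation that the $x$-independent denominator drops out of the Wronskian: it is precisely this that singles out the $\tanh$ and makes the transformed polynomials land exactly on the plane-wave Wronskians underlying the polynomial tau functions of KdV.
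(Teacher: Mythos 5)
Your proposal is correct, but it takes a genuinely different route from the paper's own proof. The paper keeps the functions $\phi_j$ of \eqref{eqGenSeriesPhi} throughout and proves Sato's equation for them directly: the price is that the $\phi_j$ satisfy the linear flow equations only up to correction terms with $x$-independent coefficients (Lemma \ref{lemRelationPhi}), and the shift $\tbf\mapsto\tbf-[\lambda^{-1}]$ likewise holds only up to a triangular correction (Lemma \ref{lemTriangRel}); both corrections are annihilated by $\Delta_n$, which is what rescues the pre-Sato equation \eqref{eqPreSato} and the wave-function formula of Proposition \ref{propShift}. You instead eliminate the corrections at the outset: the addition formula $\sinh(xz)+\cosh(xz)\tanh(\hat\eta)=\sinh(\eta)/\cosh(\hat\eta)$, together with the $x$-independence of $1/\cosh(\hat\eta)$, gives a unitriangular, $x$-constant change of basis from the $\phi_j$ to the plane-wave functions $\chi_j$ (these are exactly the odd Schur reductions $p_j$ of the paper's remark following Lemma \ref{lemRelationPhi}), so the two Wronskians coincide by column operations; after that, everything is the textbook construction for functions obeying the exact equations $\partial_{t_k}\tilde\chi_j=\Drm^k\tilde\chi_j$, and the even flows kill the determinant by the vanishing/repeated-column argument. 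Your route concentrates all the $\tanh$-specific work in one identity, runs the Sato analysis in its easiest classical setting, and proves directly the coincidence with Hirota's Schur-polynomial tau functions, which the paper only records as a byproduct; the paper's route, in exchange, stays self-contained on the $\phi_j$ themselves and yields the explicit shift formula \eqref{eqWaveFunction} for the $\phi$-Wronskians. Two points you should still spell out, though neither is a genuine gap at the paper's own level of rigor: (i) the statement that Wronskians of functions satisfying $\partial_{t_k}f=\Drm^k f$ are KP tau functions is invoked as a black box (it is the same classical machinery the paper reruns in Propositions and Lemmas of Section 2); and (ii) the closing claim $L_n=\mathcal{L}_n^2=\Drm^2+u_n$ needs the one-line remark that $\mathcal{L}_n^2=\Delta_n\Drm^2\Delta_n^{-1}$ is purely differential because $\ker\Delta_n$ is $\Drm^2$-invariant, by the chain $\Drm^2\tilde\chi_j=\tilde\chi_{j-1}$ --- the same folklore the paper appeals to when asserting that a KP tau function independent of even times is a KdV tau function.
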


\nid As a matter of fact, this change of variables does not affect the first few variables except for a rescaling. Here are the first variables $q_{2i-1}$ in terms of the $t_{2i-1}$'s: $s_1 = t_1 = x$, and then
\begin{align*}
q_3 &= -3\cdot t_3,	&	q_7 &= -1575\cdot t_7,	&								q_{11} &= -9823275\cdot \left(t_{11} - t_3^2t_5 \right),\\
q_5 &= 45\cdot t_5,	&	q_9 &=\textstyle 99255\cdot \left(t_9 - \frac{1}{3}t_3 \right), &		q_{13} &= 1404728325\cdot \left(t_{13} - t_3^2t_7 - t_3t_5^2 \right).
\end{align*}
\nid Notice that the change of variables is homogeneous for the grading $\deg q_i = \deg t_i = \deg s_i = i$, as proven by Adler and Moser \cite{AM78}. Under this change of variables, we obtain the following polynomial tau functions of KdV:
\begin{align*}
\tau_0 &= 1,\\
\tau_1 &= x,\\
\tau_2 &= x^3 - 3t_3,\\
\tau_3 &= x^6 - 15t_3x^3 - 45t_3^2 + 45t_5x,\\
\tau_4 &= x^{10} -45t_3x^7 + 315t_5x^5 + 4725t_3t_5x^2 - 4725t_3^3x - 4725t_5^2 - 1475t_7x^3 + 4725t_3t_7.
\end{align*}
Then the polynomials $\tau_n/\mu_n$ (cf. Equation \eqref{eqPropAMvsWronsk}) correspond to the tau functions computed via the Wronskians of the elementary Schur polynomials following Hirota \cite{Hir04}.

To prove Theorem \ref{thmChangeVar}, we introduce another sequence of functions defined by the generating series.
\begin{equation}\label{eqGenSeriesPhi}
\sum_{j\geq 1} \phi_j z^{2j-1} = \sinh(xz) + \cosh(xz)\tanh \lp \sum_{i\geq 2} t_{2i-1}z^{2i-1} \rp.
\end{equation}
It amounts to applying the change of variables of Equation \eqref{eqChangeVar} to the functions $\psi_j$ of Equation \eqref{eqPsiGen}. With these functions, define another sequence of Wronskians:
\begin{equation}\label{eqWronskianPhi}
\tau_n:= \Wr(\phi_1,\ldots,\phi_n),
\end{equation}
In what follows, we prove that these Wronskians are tau functions of the KP hierarchy. Yet because they depend only on odd times, they are tau functions of KdV, which proves Theorem \ref{thmChangeVar}. We use the same approach as in \cite{Caf08} and \cite{IZ92}. First we need the following lemma.

\begin{lem}\label{lemRelationPhi}
The functions $\phi_j$ satisfy the following relation for any integers $i, j\geq 1$:
\begin{equation}
\phi_j^{(2i-1)} - \frac{\partial \phi_j}{\partial t_{2i-1}} = \sum_{k = 1}^{j-i-1} \phi_k a_{j-i-k+1},
\end{equation}
where $\phi_j^{(2i-1)} = \partial_x^{2i-1}\phi_j$. Here the $a_j$'s are functions that do not depend on $x$ and are defined by
$$\sum_{j\geq 2} a_jz^{2j-1} = \tanh\lp \sum_{i\geq 2} t_{2i-1}z^{2i-1} \rp.$$
\end{lem}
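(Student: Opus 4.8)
The plan is to establish the relation at the level of the generating series and then read off the coefficient of $z^{2j-1}$. Write $\Phi(x,z) = \sum_{j\geq 1}\phi_j z^{2j-1}$ for the series in \eqref{eqGenSeriesPhi}, and abbreviate $A(z) = \tanh\lp\sum_{i\geq 2}t_{2i-1}z^{2i-1}\rp = \sum_{m\geq 2}a_m z^{2m-1}$, so that $\Phi = \sinh(xz) + \cosh(xz)A(z)$ with $A$ independent of $x$. Since the asserted relation is exactly the coefficient-wise form of a single closed identity, it suffices to produce that identity and then match powers of $z$.

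First I would differentiate $\Phi$ in $x$ an odd number of times. Because an odd number of $x$-derivatives interchanges $\sinh(xz)$ and $\cosh(xz)$ while pulling out a factor $z^{2i-1}$, one gets $\partial_x^{2i-1}\Phi = z^{2i-1}\lp\cosh(xz) + \sinh(xz)A(z)\rp$. Next I would differentiate in $t_{2i-1}$ (with $i\geq 2$, so that $t_{2i-1}$ enters $\Phi$ only through the $\tanh$). Using $\partial_{t_{2i-1}}\lp\sum_{i'\geq 2}t_{2i'-1}z^{2i'-1}\rp = z^{2i-1}$ together with $\frac{d}{du}\tanh u = 1-\tanh^2 u$, I obtain $\partial_{t_{2i-1}}A(z) = (1-A(z)^2)z^{2i-1}$, whence $\partial_{t_{2i-1}}\Phi = z^{2i-1}\cosh(xz)\lp 1 - A(z)^2\rp$.

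The key step is to subtract the two expressions. The two $\cosh(xz)$ contributions cancel, and what is left factors as $\partial_x^{2i-1}\Phi - \partial_{t_{2i-1}}\Phi = z^{2i-1}A(z)\lp\sinh(xz) + \cosh(xz)A(z)\rp = z^{2i-1}A(z)\,\Phi$. This factorization is the heart of the argument, and the only genuinely delicate point is to recognize that the term $1-A^2$ produced by the derivative of $\tanh$ is precisely what is needed to reassemble the factor $\Phi$. It is this identity $\tanh' = 1-\tanh^2$ that makes the change of variables \eqref{eqChangeVar} tick, so I expect the real content to sit here rather than in any lengthy computation.

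Finally I would extract the coefficient of $z^{2j-1}$ on both sides. On the left it is $\phi_j^{(2i-1)} - \partial_{t_{2i-1}}\phi_j$. On the right, expanding $z^{2i-1}A(z)\Phi = \sum_{k\geq 1,\,m\geq 2}\phi_k a_m\, z^{2(i+k+m-1)-1}$, the coefficient of $z^{2j-1}$ collects the terms with $m = j-i-k+1$; the constraints $k\geq 1$ and $m\geq 2$ then force $1\leq k\leq j-i-1$, which yields $\sum_{k=1}^{j-i-1}\phi_k a_{j-i-k+1}$ and hence the claimed relation, with the convention that the sum is empty and both sides vanish when $j\leq i+1$.
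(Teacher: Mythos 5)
Your proof is correct and is essentially the paper's own argument: the paper likewise differentiates the generating series \eqref{eqGenSeriesPhi} and reads off the coefficient of $z^{2j-1}$ via a Cauchy product, fixing the summation bounds by noting $\phi_0 = 0$ and $a_0 = a_1 = 0$, so you have merely written out explicitly the factorization $\partial_x^{2i-1}\Phi - \partial_{t_{2i-1}}\Phi = z^{2i-1}A(z)\Phi$ that the paper compresses into ``a direct calculation.'' The one caveat, which the paper's proof shares, is that the computation only applies for $i\geq 2$ (as you yourself flag, since $t_{2i-1}$ must enter through the $\tanh$): for $i=1$ one has $\partial_{t_1}=\partial_x$, so the left-hand side vanishes identically while the stated right-hand side does not (e.g.\ for $j=3$ it equals $\phi_1 a_2 = x t_3$), a harmless slip in the lemma's stated range $i\geq 1$ that affects neither your argument nor the paper's use of the lemma in the subsequent proposition.
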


\begin{proof}
It is a direct calculation: differentiating Equation \eqref{eqGenSeriesPhi} and using a Cauchy product, one obtains
\begin{align*}
\sum_{j\geq 1} \lp \phi_j^{(2i-1)} - \frac{\partial\phi_j}{\partial t_{2i-1}} \rp z^{2j-1} = \sum_{j\geq 1} z^{2j-1}\sum_{k = 0}^{j -i} \phi_k a_{j-i-k+1}.
\end{align*}
Then, noticing that $\phi_0 = 0$ and $a_0 = a_1 = 0$, one gets the correct boundaries in the last sum.
\end{proof}

\begin{rmq}
This relation is to be compared with the one satisfied by the elementary Schur polynomials defined by
\begin{align*}
\exp\lp \sum_{i\geq 1}t_iz^i \rp = \sum_{j\geq 1} P_jz^j.
\end{align*}
These polynomials satisfy the relation $\partial_x^i P_j = \partial_{t_i} P_j.$ Now define $p_j$ to be $P_{2j-1}$ where all even times are set to 0. They satisfy the relation
$$\partial_x^{2i-1} p_j = \partial_{t_{2i-1}} p_j.$$
The last relation allows to prove that their Wronskians $\Wr(p_1,\ldots, p_n)$ are the tau functions of KdV the same way we prove that the Wronskians of the $\phi_j$'s are (see for instance \cite{IZ92}). In particular the Wronskians of the $p_j$'s and those of the $\phi_j$'s coincide.
\end{rmq}

\nid Now let us introduce all the variables $(x=t_1, t_2, t_3, \ldots)$ of the KP hierarchy (that is, odd and even). To prove that the $\tau_n$'s are tau functions of KP we use Sato's equation, which we state in a equivalent form in the following proposition.

\begin{prop}
Let the following differential operator
\begin{align*}
\Delta_n(\chi) := \frac{\Wr\lp \chi, \phi_1, \ldots, \phi_n \rp}{\Wr\lp \phi_1, \ldots, \phi_n \rp} = \frac{1}{\tau_n}\Wr\lp \chi, \phi_1, \ldots, \phi_n \rp,
\end{align*}
for any differentiable function $\chi$ with respect to $x$. The following equation holds for any $i\geq 1$:
\begin{equation}\label{eqPreSato}
\frac{\partial \Delta_n}{\partial t_{i}} = \lp \Delta_n \Drm^{i} \Delta_n^{-1} \rp_+ \Delta_n - \Delta_n \Drm^{i}.
\end{equation}
\end{prop}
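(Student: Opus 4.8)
The plan is to recognise \eqref{eqPreSato} as the dressing (Sato) equation for the monic differential operator $\Delta_n$ and to establish it by testing both sides against the functions $\phi_1,\ldots,\phi_n$. First I would record two structural facts about $\Delta_n$. Expanding the Wronskian $\Wr(\chi,\phi_1,\ldots,\phi_n)$ along its $\chi$-column exhibits $\Delta_n$, as an operator in $\chi$, as a differential operator of order $n$ in $\Drm$ whose leading coefficient is $\Wr(\phi_1,\ldots,\phi_n)/\tau_n = 1$; thus $\Delta_n = \Drm^n + a_{n-1}\Drm^{n-1} + \cdots + a_0$ is monic, and its formal pseudo-differential inverse $\Delta_n^{-1}$ is well defined. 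Second, since the Wronskian acquires two equal columns when $\chi = \phi_j$, one has $\Delta_n\phi_j = 0$ for every $j = 1,\ldots,n$ and at all times; that is, the $\phi_j$ span $\Ker\Delta_n$.

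Next I would check that both sides of \eqref{eqPreSato} are differential operators of order at most $n-1$. For the left-hand side this is immediate, since $\partial_{t_i}$ annihilates the constant leading coefficient of the monic operator $\Delta_n$, so $\partial_{t_i}\Delta_n$ has order $\leq n-1$. For the right-hand side I would use the identity
\[
\lp \Delta_n\Drm^i\Delta_n^{-1} \rp_+\Delta_n - \Delta_n\Drm^i = -\lp \Delta_n\Drm^i\Delta_n^{-1} \rp_-\Delta_n,
\]
which follows from $(\cdot)_+ + (\cdot)_- = \id$ together with $\Delta_n^{-1}\Delta_n = \id$. The factor $\lp \Delta_n\Drm^i\Delta_n^{-1} \rp_-$ has order $\leq -1$ while $\Delta_n$ has order $n$, so the product, which this identity shows to be the (manifestly differential) right-hand side, has order $\leq n-1$.

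Since a differential operator of order $\leq n-1$ is completely determined by its action on $n$ functions whose Wronskian is nonzero, and $\Wr(\phi_1,\ldots,\phi_n) = \tau_n \neq 0$, it then suffices to show that the two sides agree on each $\phi_j$. Differentiating $\Delta_n\phi_j = 0$ with respect to $t_i$ gives $(\partial_{t_i}\Delta_n)\phi_j = -\Delta_n(\partial_{t_i}\phi_j)$, whereas the kernel property $\Delta_n\phi_j = 0$ collapses the right-hand side on $\phi_j$ to $-\Delta_n\Drm^i\phi_j = -\Delta_n\phi_j^{(i)}$. Hence the two operators coincide on $\phi_j$ exactly when $\Delta_n\lp \partial_{t_i}\phi_j - \phi_j^{(i)} \rp = 0$, i.e. when $\phi_j^{(i)} - \partial_{t_i}\phi_j \in \Ker\Delta_n = \Vect(\phi_1,\ldots,\phi_n)$.

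This membership is precisely what the two structural relations of the $\phi_j$ supply, and verifying it is the only genuine computation. For $i=1$ it is trivial, as $t_1 = x$ forces $\phi_j^{(1)} - \partial_{t_1}\phi_j = 0$. For odd $i = 2l-1 \geq 3$, Lemma \ref{lemRelationPhi} writes $\phi_j^{(2l-1)} - \partial_{t_{2l-1}}\phi_j$ as $\sum_{k=1}^{j-l-1}\phi_k\,a_{j-l-k+1}$, a combination of $\phi_1,\ldots,\phi_{j-l-1}$ and hence an element of $\Ker\Delta_n$. For even $i = 2l$ one has $\partial_{t_{2l}}\phi_j = 0$ because the generating series \eqref{eqGenSeriesPhi} involves only odd times; moreover differentiating that series twice in $x$ yields $\phi_j'' = \phi_{j-1}$, so $\phi_j^{(2l)} = \phi_{j-l}$, again lying in $\Vect(\phi_1,\ldots,\phi_n)$ (or vanishing). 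In every case $\Delta_n$ kills $\phi_j^{(i)} - \partial_{t_i}\phi_j$, the two sides of \eqref{eqPreSato} agree on $\phi_1,\ldots,\phi_n$, and the determinacy argument closes the proof. I expect the only delicate point to be bookkeeping: confirming that the order bound on the right-hand side really descends to $n-1$ and that the index ranges in Lemma \ref{lemRelationPhi} keep all correction terms inside $\{\phi_1,\ldots,\phi_n\}$ for every $j \leq n$, so that they genuinely belong to the kernel.
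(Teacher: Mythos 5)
Your proposal is correct and follows essentially the same strategy as the paper: reduce to checking both sides of \eqref{eqPreSato} on $\phi_1,\ldots,\phi_n$, use $\Delta_n\phi_j=0$ and the fact that the $a_j$ of Lemma \ref{lemRelationPhi} are $x$-independent (hence killed with $\phi_k$) for odd flows, and use $\phi_j''=\phi_{j-1}$ together with $\partial_{t_{2\ell}}\phi_j=0$ for even flows. Your only addition is the explicit order-$(n-1)$ bound on both sides (via $(\cdot)_+ + (\cdot)_- = \id$) justifying why agreement on $n$ functions with nonvanishing Wronskian determines the operator — a point the paper's proof asserts tersely with ``it is sufficient'' — so this is a welcome tightening of the same argument rather than a different route.
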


\begin{proof}
It is sufficient to prove the equality when acting on $\phi_1,\ldots, \phi_n$ which are $n$ linearly independent functions. Yet these functions are solutions of the equation $\Delta_n (\phi_j) = 0,$ so it amounts to proving that
\begin{align*}
\frac{\partial \Delta_n}{\partial t_i} \lp\phi_j\rp + \Delta_n \lp \phi_j^{(i)} \rp = 0.
\end{align*}
If $i = 2\ell$ is even, then we only have to prove that $\Delta_n \lp \phi_j^{(2\ell)} \rp = 0.$ Yet $\phi_j'' = \phi_j$ so that $\phi_j^{(2\ell)} = \phi_{j-\ell}$, or 0 if $j\leq \ell$. Eventually, it amounts to $\Delta_n\lp\phi_{j-\ell} \rp = 0$ which holds true. If $i = 2\ell-1$ is odd, by Lemma \ref{lemRelationPhi}, it amounts to proving that
\begin{align*}
\frac{\partial \Delta_n}{\partial t_{2\ell-1}} \lp\phi_j\rp + \Delta_n \frac{\partial}{\partial t_{2\ell-1}}\lp\phi_j\rp  + \sum_{k = 1}^{j-\ell-1} \Delta_n \lp \phi_k a_{j-\ell-k+1} \rp = 0.
\end{align*}
Yet the functions $a_j$ do not depend on $x$, so the last sum vanishes. And for the other terms it amounts to
\begin{align*}
\frac{\partial \Delta_n}{\partial t_{2\ell-1}} \lp\phi_j\rp+ \Delta_n \frac{\partial}{\partial t_{2\ell-1}}\lp\phi_j\rp = \frac{\partial}{\partial t_{2\ell-1}}\Delta_n \lp\phi_j \rp = 0.
\end{align*}
\end{proof}

\nid Now the fact that Equation \eqref{eqPreSato} implies the KP hierarchy via Sato's Equation is a classical result which we sum up in the following proposition.

\begin{prop}
Let the pseudo-differential operator $S_n = \Delta_n\Drm^{-n}$. Then $S_n$ satisfies Sato's equation
\begin{align}\label{eqSato}
\frac{\partial S_n}{\partial t_i} + \lp S_n\Drm^i S_n^{-1} \rp_- S_n = 0.
\end{align}
Moreover, the pseudo-differential operator $\mathcal{L}_n = S_n\Drm S_n^{-1}$ is of the form $\mathcal{L}_n = \Drm + \sum_{k\geq 1} f_{n,k}\Drm^{-k}$ and satisfies the Lax system of the KP hierarchy
\begin{equation}\label{eqKPLax}
\frac{\partial \mathcal{L}_n}{\partial t_i} = \left[ \lp \mathcal{L}_n^i \rp_+, \mathcal{L}_n \right].
\end{equation}
\end{prop}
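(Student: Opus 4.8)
The plan is to treat the statement in two stages: first deduce Sato's equation \eqref{eqSato} from the ``pre-Sato'' identity \eqref{eqPreSato} by a purely algebraic manipulation in the ring of pseudo-differential operators, and then extract the Lax system \eqref{eqKPLax} from \eqref{eqSato} by the standard dressing argument. The two observations that make the first stage work are that $\Drm$ has constant coefficients, so that $\partial_{t_i}\Drm^n = 0$ and hence $\partial_{t_i}\Delta_n = \lp\partial_{t_i}S_n\rp\Drm^n$; and that $\mathcal{L}_n = S_n\Drm S_n^{-1} = \Delta_n\Drm^{-n}\Drm\,\Drm^n\Delta_n^{-1} = \Delta_n\Drm\Delta_n^{-1}$, so that $\mathcal{L}_n^i = \Delta_n\Drm^i\Delta_n^{-1}$ and therefore $\Delta_n\Drm^i = \mathcal{L}_n^i\Delta_n$.

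Granting these, I would rewrite the right-hand side of \eqref{eqPreSato} as $\lp\mathcal{L}_n^i\rp_+\Delta_n - \Delta_n\Drm^i = \lp\mathcal{L}_n^i\rp_+\Delta_n - \mathcal{L}_n^i\Delta_n = -\lp\mathcal{L}_n^i\rp_-\Delta_n$, where $(\cdot)_- = (\cdot)-(\cdot)_+$ denotes the negative-order part. Comparing with the left-hand side $\lp\partial_{t_i}S_n\rp\Drm^n$ and writing $\Delta_n = S_n\Drm^n$, I obtain $\lp\partial_{t_i}S_n\rp\Drm^n = -\lp\mathcal{L}_n^i\rp_- S_n\Drm^n$; multiplying on the right by $\Drm^{-n}$ gives exactly $\partial_{t_i}S_n = -\lp S_n\Drm^i S_n^{-1}\rp_- S_n$, which is \eqref{eqSato}.

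For the second stage I would first record the shape of the operators. Since $\Wr(\chi,\phi_1,\ldots,\phi_n)$ has leading term $\tau_n\chi^{(n)}$, the operator $\Delta_n$ is monic of order $n$, so $S_n = \Delta_n\Drm^{-n} = 1 + \sum_{k\geq1} w_{n,k}\Drm^{-k}$ is monic of order $0$. A short computation then shows that conjugating $\Drm$ by such an $S_n$ produces $\mathcal{L}_n = S_n\Drm S_n^{-1} = \Drm + \sum_{k\geq1}f_{n,k}\Drm^{-k}$, the coefficient of $\Drm^0$ cancelling automatically. Differentiating $\mathcal{L}_n = S_n\Drm S_n^{-1}$ in $t_i$, using $\partial_{t_i}S_n^{-1} = -S_n^{-1}\lp\partial_{t_i}S_n\rp S_n^{-1}$ together with \eqref{eqSato}, collapses the two resulting terms into $\partial_{t_i}\mathcal{L}_n = \lb \mathcal{L}_n, \lp S_n\Drm^i S_n^{-1}\rp_- \rb = \lb \mathcal{L}_n, \lp\mathcal{L}_n^i\rp_- \rb$. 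Finally, since $\lb\mathcal{L}_n,\mathcal{L}_n^i\rb = 0$ and $\lp\mathcal{L}_n^i\rp_- = \mathcal{L}_n^i - \lp\mathcal{L}_n^i\rp_+$, this equals $\lb\lp\mathcal{L}_n^i\rp_+,\mathcal{L}_n\rb$, which is \eqref{eqKPLax}.

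I do not expect a genuine obstacle, as everything reduces to bookkeeping in the pseudo-differential calculus once \eqref{eqPreSato} is in hand. The only points requiring care are the formal convention that $\Drm$ is a constant-coefficient symbol, so that $\partial_{t_i}\Drm = 0$ even when $i=1$ and $t_1 = x$, and keeping track of the side on which each factor of $\Delta_n$ or $S_n$ sits when commuting powers of $\Drm$ through; the identity $\Delta_n\Drm^i = \mathcal{L}_n^i\Delta_n$ is precisely what lets one avoid moving $\Drm^i$ past the coefficients of $\Delta_n$ by hand.
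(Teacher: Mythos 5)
Your proof is correct and follows essentially the same route as the paper: the paper's own proof is a two-line sketch asserting that \eqref{eqPreSato} ``readily implies'' \eqref{eqSato} and that the Lax system follows by differentiating the dressing relation $\mathcal{L}_nS_n = S_n\Drm$ and invoking $\lb \mathcal{L}_n,\mathcal{L}_n^i\rb = 0$, which is exactly the argument you carry out in detail (your key identities $\partial_{t_i}\Delta_n = (\partial_{t_i}S_n)\Drm^n$ and $\Delta_n\Drm^i = \mathcal{L}_n^i\Delta_n$ are the right bookkeeping). One cosmetic caveat: with the paper's ordering $\Wr(\chi,\phi_1,\ldots,\phi_n)$ the coefficient of $\chi^{(n)}$ is $(-1)^n\tau_n$, so $S_n = (-1)^n + O(\Drm^{-1})$ rather than strictly monic; this constant factor cancels in every conjugation and plays no role in the derivation of \eqref{eqSato} or \eqref{eqKPLax}, so your argument is unaffected.
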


\begin{proof}
Equation \eqref{eqPreSato} on differential operators readily implies Sato's equation \eqref{eqSato} on purely pseudo-differential operators. Then it is well known how to derive the Lax system of KP from Sato's equation. One only has to differentiate the relation $\mathcal{L}_nS_n = S_n\Drm$ by $t_i$, then use the relation $\lb \mathcal{L}_n, \mathcal{L}_n^{i}\rb = 0$.
\end{proof}

\nid Finally, the following proposition states that the dressing operator $S_n$ is indeed related to the Wronskians $\tau_n$ via the usual shift equation, ie, that $\tau_n$ is a tau function of KP.

\begin{prop}\label{propShift}
The dressing operator $S_n$ and the Wronskian $\tau_n$ satisfy the following relation
\begin{equation}\label{eqWaveFunction}
S_n\lp \e^{\xi(\tbf;\lambda)} \rp = \e^{\xi(\tbf;\lambda)}\frac{\tau_n\lp \tbf - [\lambda^{-1}] \rp}{\tau_n(\tbf)},
\end{equation}
where $\xi(\tbf;\lambda) = \sum_{i\geq 1} t_i\lambda^i$ and 
\begin{equation}\label{eqShift}
\tau_n \lp \tbf - [\lambda^{-1}] \rp := \textstyle \tau_n \lp t_1 - \frac{1}{\lambda}, t_3 - \frac{1}{3\lambda^3}, t_5 - \frac{1}{5\lambda^5}, \ldots \rp.
\end{equation}
\end{prop}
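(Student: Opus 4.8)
The plan is to evaluate the left-hand side $S_n\lp\e^{\xi}\rp$ explicitly and thereby reduce the proposition to a single algebraic identity between determinants. Since $\Drm=\partial_x=\partial_{t_1}$ and $\xi=\sum_{i\geq1}t_i\lambda^i$, we have $\Drm^k\e^{\xi}=\lambda^k\e^{\xi}$, hence $\Drm^{-n}\e^{\xi}=\lambda^{-n}\e^{\xi}$ and
\[
S_n\lp\e^{\xi}\rp=\Delta_n\lp\lambda^{-n}\e^{\xi}\rp=\frac{\lambda^{-n}}{\tau_n}\,\Wr\lp\e^{\xi},\phi_1,\ldots,\phi_n\rp .
\]
Thus the statement is equivalent to the determinant identity $\Wr\lp\e^{\xi},\phi_1,\ldots,\phi_n\rp=(-1)^n\lambda^n\e^{\xi}\,\tau_n\lp\tbf-[\lambda^{-1}]\rp$; the sign $(-1)^n$ merely records that $\chi$ is placed first in $\Delta_n$, and it is absorbed into the normalisation convention (it does not affect $\mathcal{L}_n$ nor $u_n$).

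First I would simplify the bordered Wronskian by row operations alone, using no property of the $\phi_j$. Factoring $\e^{\xi}$ out of the first column (whose entries are $\lambda^r$, $r=0,\ldots,n$) and then performing $R_r\mapsto R_r-\lambda R_{r-1}$ for $r=n,n-1,\ldots,1$ (rows indexed by derivative order), the first column collapses to $(1,0,\ldots,0)^{\mathsf T}$ while each remaining entry becomes $\Drm^{\,r-1}\lp\Drm-\lambda\rp\phi_c$. Expanding along the first column and writing $\widehat\phi_c:=\lp1-\lambda^{-1}\Drm\rp\phi_c$, this gives
\[
\Wr\lp\e^{\xi},\phi_1,\ldots,\phi_n\rp=\e^{\xi}\,\Wr\lp\phi_1'-\lambda\phi_1,\ldots,\phi_n'-\lambda\phi_n\rp=(-\lambda)^n\e^{\xi}\,\Wr\lp\widehat\phi_1,\ldots,\widehat\phi_n\rp .
\]

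The crux is then to identify $\Wr\lp\widehat\phi_1,\ldots,\widehat\phi_n\rp$ with $\tau_n\lp\tbf-[\lambda^{-1}]\rp=\Wr\lp\widetilde\phi_1,\ldots,\widetilde\phi_n\rp$, where $\widetilde\phi_c:=\phi_c\lp\tbf-[\lambda^{-1}]\rp$ (the Wronskian being taken in $x$, which commutes with the Miwa shift). Here I would first rewrite \eqref{eqGenSeriesPhi} in closed form: with $X(z)=\sum_{i\geq1}t_{2i-1}z^{2i-1}$ and $\Theta(z)=X(z)-xz=\sum_{i\geq2}t_{2i-1}z^{2i-1}$, the addition formula yields $\sum_j\phi_jz^{2j-1}=\sinh X/\cosh\Theta$, so that $\sum_c\widehat\phi_cz^{2c-1}=\lp\sinh X-\tfrac{z}{\lambda}\cosh X\rp/\cosh\Theta$, carrying all of the $x$-dependence in the numerator factor $\sinh X-\tfrac{z}{\lambda}\cosh X$. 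On the other hand the shift sends $X\mapsto X-\operatorname{arctanh}(z/\lambda)$ and $\Theta\mapsto\Theta-\operatorname{arctanh}(z/\lambda)+z/\lambda$; expanding the shifted $\sinh$ and $\cosh$ shows that $\sum_c\widetilde\phi_cz^{2c-1}$ has \emph{exactly} the same $x$-dependent factor $\sinh X-\tfrac{z}{\lambda}\cosh X$, multiplied by a series $R(z,\lambda)$ that is independent of $x$, even in $z$, and equal to $1+O(z^2)$. Matching coefficients of $z^{2c-1}$ gives $\widetilde\phi_c=\widehat\phi_c+\sum_{\ell\geq1}r_\ell\,\widehat\phi_{c-\ell}$ with $x$-independent $r_\ell$, a unipotent lower-triangular change of columns; since the $x$-Wronskian is invariant under adding $x$-independent multiples of earlier columns, $\Wr\lp\widehat\phi_1,\ldots,\widehat\phi_n\rp=\Wr\lp\widetilde\phi_1,\ldots,\widetilde\phi_n\rp=\tau_n\lp\tbf-[\lambda^{-1}]\rp$.

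Combining the two displays establishes the determinant identity and hence the proposition. The main obstacle is precisely this third step: the total cancellation of the $x$-dependence between $\widehat\phi_c$ and $\widetilde\phi_c$ is opaque in the original $\sinh(xz)+\cosh(xz)\tanh(\cdots)$ presentation and becomes transparent only after passing to the form $\sinh X/\cosh\Theta$, which is the key computational idea. An equivalent route would be to match the two sides order by order in $\lambda^{-1}$, using Lemma \ref{lemRelationPhi} to trade $t$-derivatives for $x$-derivatives; but then one must track the corrections $\sum_k\phi_k a_{\cdots}$ by hand, whereas the generating-function argument organises exactly this bookkeeping into the single scalar factor $R(z,\lambda)$.
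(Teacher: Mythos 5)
Your proof is correct, and its skeleton is the same as the paper's: reduce \eqref{eqWaveFunction} to an identity between the bordered Wronskian $\Wr\lp\e^{\xi},\phi_1,\ldots,\phi_n\rp$ and the shifted tau function, clear the exponential column by row operations, and identify $\Wr\lp\widehat\phi_1,\ldots,\widehat\phi_n\rp$ with $\tau_n\lp\tbf-[\lambda^{-1}]\rp$ through a unipotent, $x$-independent triangular change of columns. Three differences are worth recording. First, that triangular relation is precisely the paper's Lemma \ref{lemTriangRel}, which the paper states separately and then invokes; your $R(z,\lambda)$ is exactly the paper's series $\sum_{j\geq 0}b_jz^{2j}$ of Equation \eqref{eqGenSerB}. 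You re-derive it, and your route---rewriting \eqref{eqGenSeriesPhi} in the closed form $\sinh X/\cosh\Theta$ and tracking how the Miwa shift acts on $X$ and $\Theta$ separately---is cleaner than the paper's direct manipulation of hyperbolic sum formulae, since it makes the total cancellation of the $x$-dependence manifest. Second, you spell out the determinant manipulations that the paper compresses into ``using operations on rows and columns, we can easily check that these two expressions are equal.'' Third, your sign bookkeeping improves on the paper: with the stated convention $\Delta_n(\chi)=\tau_n^{-1}\Wr\lp\chi,\phi_1,\ldots,\phi_n\rp$ (extra function in the \emph{first} slot), one actually gets
\begin{align*}
\Delta_n\lp\e^{\xi}\rp=(-1)^n\,\lambda^n\,\e^{\xi}\,\frac{\tau_n\lp\tbf-[\lambda^{-1}]\rp}{\tau_n(\tbf)},
\end{align*}
as your computation shows (for $n=1$: $\Wr\lp\e^{\xi},x\rp=\e^{\xi}(1-\lambda x)$, whereas $\lambda\,\e^{\xi}\,\tau_1\lp\tbf-[\lambda^{-1}]\rp=\e^{\xi}(\lambda x-1)$). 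So \eqref{eqWaveFunction} as literally stated requires either placing $\chi$ in the last slot of the Wronskian or renormalizing $S_n$ by $(-1)^n$, which also restores the standard normalization $S_n=1+O(\Drm^{-1})$. As you say, this constant affects neither $\mathcal{L}_n$ nor the resulting KdV solutions, but the paper's proof passes over it silently, so flagging it is a genuine improvement rather than a defect of your argument.
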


\nid To prove that, we need the following lemma.

\begin{lem}\label{lemTriangRel}
The shift of the functions $\phi_j$ reads the following triangular relation for any $j\geq 1$,
\begin{equation}\label{eqTriangRelPhi}
\phi_j(\tbf - [\lambda^{-1}]) = \phi_j(\tbf) - \lambda^{-1}\phi_j'(\tbf) + \sum_{i = 1}^{j-1} \lp \phi_i(\tbf) - \lambda^{-1}\phi_i'(\tbf) \rp b_{j-i}(\tbf),
\end{equation}
where the $b_j$'s are functions that do not depend on $x$ and are defined by
\begin{equation}\label{eqGenSerB}
\sum_{j\geq 0} b_jz^{2j} = \mathrm{sech} \lp z\lambda^{-1} \rp \lb 1 - z\lambda^{-1}\tanh \lp z\lambda^{-1}\rp - z\lambda^{-1}\tanh \lp \eta \rp + \tanh\lp z\lambda^{-1}\rp \tanh \lp \eta \rp \rb^{-1},
\end{equation}
with $b_0 = 1$, and
$$\eta(\tbf;z) := \sum_{i\geq 2} t_{2i-1}z^{2i-1}.$$
Here $\mathrm{sech} = 1/\cosh$ and the exponent $-1$ stands for the multiplicative inverse of formal power series.
\end{lem}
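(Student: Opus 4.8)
The plan is to verify the identity at the level of generating series in $z$, where all the structure becomes transparent. First I would rewrite the defining series \eqref{eqGenSeriesPhi} in the compact form
\begin{align*}
\Phi := \sum_{j\geq 1}\phi_j z^{2j-1} = \frac{\sinh\lp\xi\rp}{\cosh\lp\eta\rp}, \qquad \xi := \sum_{i\geq 1} t_{2i-1}z^{2i-1}, \quad \eta := \sum_{i\geq 2} t_{2i-1}z^{2i-1},
\end{align*}
using the addition formula $\sinh(xz) + \cosh(xz)\tanh\lp\eta\rp = \sinh(xz+\eta)/\cosh\lp\eta\rp$ together with $\xi = xz + \eta$. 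Since $\partial_x\xi = z$ and $\partial_x\eta = 0$, differentiating gives $\Phi' = z\cosh\lp\xi\rp/\cosh\lp\eta\rp$, hence $\Phi - \lambda^{-1}\Phi' = \lp\sinh\xi - w\cosh\xi\rp/\cosh\eta$, where I abbreviate $w := z\lambda^{-1}$.

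Next I would recast the claimed relation \eqref{eqTriangRelPhi} as a single statement about generating series. Absorbing the leading term into the sum via $b_0 = 1$, the right-hand side of \eqref{eqTriangRelPhi} equals $\sum_{i=1}^{j}\lp\phi_i - \lambda^{-1}\phi_i'\rp b_{j-i}$; multiplying by $z^{2j-1}$ and summing over $j\geq 1$, this is exactly the Cauchy product
\begin{align*}
\sum_{j\geq 1}\phi_j\lp\tbf - [\lambda^{-1}]\rp z^{2j-1} = \lp\Phi - \lambda^{-1}\Phi'\rp\, B, \qquad B := \sum_{k\geq 0} b_k z^{2k},
\end{align*}
in which the odd series $\Phi - \lambda^{-1}\Phi'$ meets the even series $B$ so as to land back on the odd powers $z^{2j-1}$. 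Thus proving the lemma reduces to showing that $B$, defined by this equation, is independent of $x$ and carries the stated generating series \eqref{eqGenSerB}.

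To evaluate the left-hand side I would carry out the shift. With the elementary identity $\sum_{i\geq 1} z^{2i-1}/\lp(2i-1)\lambda^{2i-1}\rp = \mathrm{arctanh}\lp w\rp =: A$, the substitution $\tbf\mapsto\tbf - [\lambda^{-1}]$ sends $\xi\mapsto \xi - A$ and $\eta\mapsto \eta - A + w$, so the shifted series is $\sinh\lp\xi - A\rp/\cosh\lp\eta - A + w\rp$. Solving for $B$ then gives
\begin{align*}
B = \frac{\sinh\lp\xi - A\rp}{\cosh\lp\eta - A + w\rp}\cdot\frac{\cosh\eta}{\sinh\xi - w\cosh\xi}.
\end{align*}
The decisive step, and the only genuine cancellation, is the addition formula $\sinh\lp\xi - A\rp = \cosh\lp A\rp\lp\sinh\xi - \tanh\lp A\rp\cosh\xi\rp = \cosh\lp A\rp\lp\sinh\xi - w\cosh\xi\rp$, valid precisely because $\tanh A = w$. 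This removes the entire $\xi$-dependent factor $\sinh\xi - w\cosh\xi$ and leaves $B = \cosh\lp A\rp\cosh\lp\eta\rp/\cosh\lp\eta - A + w\rp$, which no longer depends on $x$; this cancellation is exactly what guarantees that the coefficients $b_j$ are $x$-free, i.e. that \eqref{eqTriangRelPhi} is genuinely triangular.

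Finally I would expand $\cosh\lp\eta - A + w\rp$ by the addition formulas, divide numerator and denominator by $\cosh\lp A\rp\cosh\lp\eta\rp\cosh\lp w\rp$, and replace $\tanh A$ by $w$ throughout: the numerator collapses to $\mathrm{sech}\lp w\rp$ and the denominator to $1 - w\tanh\lp w\rp - w\tanh\lp\eta\rp + \tanh\lp w\rp\tanh\lp\eta\rp$, which is precisely \eqref{eqGenSerB}. The main obstacle here is organizational rather than computational: one must first spot the compact form $\Phi = \sinh\xi/\cosh\eta$ and recognize the triangular relation as a Cauchy product, after which everything hinges on the single $\xi$-cancellation above.
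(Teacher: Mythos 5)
Your proof is correct and takes essentially the same route as the paper: both use the $\tanh^{-1}$ expansion of $\sum_{i\geq 1} z^{2i-1}/\lp(2i-1)\lambda^{2i-1}\rp$ to perform the shift, apply the hyperbolic addition formulas to factor the shifted generating series as $\lp\Phi - \lambda^{-1}\Phi'\rp$ times an $x$-independent even series with constant term $1$, and read off the triangular relation as a Cauchy product. The paper compresses the computation into ``applying the sum formulae of hyperbolic functions,'' whereas you carry out the key cancellation of $\sinh\xi - z\lambda^{-1}\cosh\xi$ explicitly, but the argument is the same.
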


\begin{proof}
Using the fact that 
$$\tanh^{-1}(z\lambda^{-1}) = z\lambda^{-1} + \sum_{i\geq 2} \frac{z^{2i-1}}{(2i-1)\lambda^{2i-1}}$$
for $z\lambda^{-1}\in (-1,1)$, and applying the sum formulae of hyperbolic functions, one obtains that
\begin{align*}
\sum_{j\geq 1} \phi_j\lp \tbf - [\lambda^{-1}] \rp z^{2j-1} = \frac{\mathrm{sech}\lp z\lambda^{-1}\rp \sum_{j\geq 1} \lp \phi_j - \lambda^{-1}\phi_j' \rp z^{2j-1} }{1 - z\lambda^{-1}\tanh \lp z\lambda^{-1}\rp - z\lambda^{-1}\tanh \lp \eta \rp + \tanh\lp z\lambda^{-1}\rp \tanh \lp \eta \rp}.
\end{align*}
Moreover, the above denominator has its constant term equal to 1 and only even powers, so is its multiplicative inverse. Therefore, the series $\sum_{j\geq 0} b_j z^{2j}$ in Equation \eqref{eqGenSerB} is well defined and has indeed $b_0 = 1$, hence the triangular relation \eqref{eqTriangRelPhi}.
\end{proof}

\begin{rmq}
As in Lemma \eqref{lemRelationPhi}, this relation is to be compared with the one satisfied by the elementary Schur polynomials, namely,
\begin{align*}
P_j(\tbf - [\lambda^{-1}]) = P_j(\tbf) - \lambda^{-1}P_{j-1}(\tbf).
\end{align*}
\end{rmq}

\nid We can now prove Proposition \ref{propShift} which concludes the proof of Theorem \ref{thmChangeVar}.

\begin{proof}[Proof of Proposition \ref{propShift}]
We prove an equivalent equation which only uses differential operator:
\begin{align*}
\Delta_n \lp \e^{\xi(\tbf;\lambda)} \rp = \lambda^n\e^{\xi(\tbf;\lambda)} \frac{\tau_n\lp \tbf - [\lambda^{-1}] \rp}{\tau_n(\tbf)}.
\end{align*}
Thanks to Lemma \ref{lemTriangRel}, we can rewrite the right-hand side as
\begin{align*}
\frac{\lambda^n\e^{\xi(\tbf;\lambda)}}{\tau_n} \left| \begin{matrix}
\phi_1 - \lambda^{-1}\phi_1' & \cdots & \phi_1^{(n-1)} - \lambda^{-1}\phi_1^{(n)}\\
\vdots & & \vdots\\
\phi_n - \lambda^{-1}\phi_n' + \sum_{i=1}^{j-1}\lp \phi_i - \lambda^{-1}\phi_i'\rp b_{j-i} & \cdots & \phi_n^{(n-1)} - \lambda^{-1}\phi_n^{(n)} + \sum_{i=1}^{j-1}\lp \phi_i^{(n-1)} - \lambda^{-1}\phi_i^{(n)}\rp b_{j-i}
\end{matrix} \right|.
\end{align*}
On the other hand, the left-hand side reads
\begin{align*}
\Delta_n\lp \e^{\xi(\tbf;\lambda)} \rp = \frac{1}{\tau_n} \left| \begin{matrix}
\e^{\xi(\tbf;\lambda)} & \lambda\e^{\xi(\tbf;\lambda)} & \cdots & \lambda^n \e^{\xi(\tbf;\lambda)}\\
\phi_1 & \phi_1' & \cdots & \phi_1^{(n)}\\
\vdots & & & \vdots\\
\phi_n & \phi_n' & \cdots & \phi_n^{(n)}
\end{matrix} \right|.
\end{align*}
Using operations on rows and columns, we can easily check that these two expressions are equal.
\end{proof}~\\

\nid \textbf{Acknowledgements.} The author would like to warmly thank M. Cafasso as well as M. Adler for their help and guidance. Besides, the author would like to express their most sincere thanks to J. Leray for his most useful insight in recognizing the Taylor series expansion of hyperbolic tangent.

\bibliography{mybib}
\bibliographystyle{alpha}

\end{document}